\documentclass[journal]{IEEEtran}

\usepackage{cite}
\usepackage{array}
\usepackage{color}
\usepackage{float}
\usepackage[cmex10]{amsmath}
\usepackage{nomencl}						
\usepackage[normalem]{ulem}			        
\usepackage{diagbox}						
\usepackage{slashbox}						
\usepackage{colortbl}						
\usepackage{multirow}						
\usepackage{tabularx}
\usepackage{placeins}
\usepackage{algorithm}
\usepackage{mathrsfs}
\usepackage{mathdots}
\usepackage{amssymb}
\usepackage{amsthm}
\usepackage{arydshln}
\usepackage{color}
\usepackage[noend]{algpseudocode}
\usepackage{bm}
\usepackage{url}
\usepackage[hidelinks]{hyperref}
\usepackage[T1]{fontenc}
\newcommand{\subparagraph}{}
\usepackage{fancyhdr} 
\usepackage{extarrows}
\usepackage{wrapfig}
\usepackage{ragged2e}

\ifCLASSINFOpdf
  \usepackage[pdftex]{graphicx}
	\graphicspath{{./figure/}}
  \DeclareGraphicsExtensions{.pdf,.jpeg,.png}
\else
  \usepackage[dvips]{graphicx}
  \graphicspath{{./figure/}}
  \DeclareGraphicsExtensions{.eps}
\fi

\ifCLASSOPTIONcompsoc
  \usepackage[caption=false,font=normalsize,labelfont=sf,textfont=sf]{subfig}
\else
  \usepackage[caption=false,font=footnotesize]{subfig}
\fi

\hyphenation{op-tical net-works semi-conduc-tor dia-go-nal}
\newcommand{\figref}[1]{\figurename~\ref{#1}}

\newcommand{\diag}{\text{diag}}

\ifCLASSOPTIONonecolumn

\else

\fi

\newtheorem{lemma}{Lemma}

\hyphenpenalty=100


\begin{document}
\clearpage
\thispagestyle{plain}
\twocolumn[
\begin{@twocolumnfalse}
	\begin{center}
		\vspace{4cm}
		\LARGE
		\textbf{Copyright Statements}
		\vspace{1cm}
	\end{center}
	\justifying
	\Large
	This work has been submitted to the IEEE for possible publication. Copyright may be transferred without notice, after which this version may no longer be accessible.
	\justifying
\end{@twocolumnfalse}
]

\clearpage
\bstctlcite{IEEEexample:BSTcontrol}
\setcounter{page}{1}

\title{Participation Analysis in Impedance Models: The Grey-Box Approach for Power System Stability}
\author{Yue~Zhu, \IEEEmembership{Student Member, IEEE}, Yunjie~Gu, \IEEEmembership{Senior Member, IEEE}, Yitong~Li, \IEEEmembership{Member, IEEE}, Timothy~C.~Green, \IEEEmembership{Fellow, IEEE}
}

\IEEEaftertitletext{\vspace{-1.5\baselineskip}}

\ifCLASSOPTIONpeerreview
	\maketitle 
\else
	\maketitle
\fi

\thispagestyle{fancy}
\lhead{This work has been submitted to the IEEE for possible publication. Copyright may be transferred without notice, after which this version may no longer be accessible.}
\rhead{\thepage}
\cfoot{}
\renewcommand{\headrulewidth}{0pt}
\pagestyle{fancy}

\begin{abstract}
This paper develops a grey-box approach to small-signal stability analysis of complex power systems that facilitates root-cause tracing without requiring disclosure of the full details of the internal control structure of apparatus connected to the system. The grey-box enables participation analysis in impedance models, which is popular in power electronics and increasingly accepted in power systems for stability analysis. The Impedance participation factor is proposed and defined in terms of the residue of the whole-system admittance matrix. It is proved that, the so defined impedance participation factor equals the sensitivity of the whole-system eigenvalue with respect to apparatus impedance. The classic state participation factor is related to the impedance participation factor via a chain-rule. Based on the chain-rule, a three-layer grey-box approach, with three degrees of transparency, is proposed for root-cause tracing to different depths, i.e. apparatus, states, and parameters, according to the available information. The association of impedance participation factor with eigenvalue sensitivity points to the re-tuning that would stabilize the system. The impedance participation factor can be measured in the field or calculated from the black-box impedance spectra with little prior knowledge required.
\end{abstract}

\begin{IEEEkeywords}
Impedance, Admittance, Participation Factor, Inverter-Based Resource, Eigenvalue Sensitivity
\end{IEEEkeywords}

\makenomenclature
\nomenclature[01]{$K, k$}{total nodes and node index, $k \in \left\{1,2,\cdots,K\right\}$}
\nomenclature[03]{$N$}{total state number}
\nomenclature[04]{$n,m$}{state and mode index, $n,m \in \left\{1,2,\cdots,N\right\}$}
\nomenclature[07]{$\hat{Y}$}{whole-system admittance matrix}
\nomenclature[08]{${Y}_\text{net}$}{nodal admittance matrix for the network}
\nomenclature[08]{${Z}$}{terminal impedance of grid-connected apparatus}
\nomenclature[09]{$x,z$}{state and mode vectors}
\nomenclature[10]{$A,a_{mn}$}{state matrix and its elements}
\nomenclature[11]{$\Lambda, \lambda$}{eigen matrix and eigenvalue}
\nomenclature[12]{$\Psi,\psi_{nm}$}{left-eigenvector matrix and its elements}
\nomenclature[13]{$\Phi,\phi_{mn}$}{right-eigenvector matrix and its elements}
\nomenclature[14]{$\rho$}{system parameter}
\nomenclature[16]{$\text{Res}_\lambda G$}{residue of $G$ at $\lambda$}
\nomenclature[17]{$p_{mn}$}{state participation factor}
\nomenclature[18]{$p_{\lambda,Z}$}{impedance participation factor}
\nomenclature[19]{$p_{\lambda,\rho}$}{parameter participation factor}
\nomenclature[20]{A}{apparatus in a power system}
\nomenclature[21]{$\langle \cdot,\cdot \rangle$}{Frobenius inner product}
\nomenclature[22]{{$\lVert\,{\cdot}\,\rVert$}}{Frobenius norm}
\nomenclature[23]{$^\top$}{matrix transpose}
\nomenclature[24]{$\overline{\phantom{x}}$}{complex conjugate}
\nomenclature[25]{$^*$}{conjugate transpose: $(\cdot)^* = \overline{({\cdot})}^\top$}

\ifCLASSOPTIONonecolumn
	\printnomenclature[2cm]
\else
	\printnomenclature[1.3 cm]
\fi

\section{Introduction}

Stability analysis for power systems must keep pace with the fast changing characteristics of the systems caused by emerging inverter-based resources (IBRs) replacing synchronous generators (SGs) and becoming the dominant sources. Unstable oscillations induced by IBRs are reported worldwide and the characteristics of such oscillations are distinctly different to the behaviour of a conventional SG-based grid \cite{bialek2020does,gu2019motion}. The understanding of the mechanisms of IBR induced instability is not yet comprehensive, nor are systematic solutions available for ensuring system-wide stability.   

One of the impediments to stability analysis is the lack of standardized and precise analytical models of the dynamic behaviour of IBRs. Unlike SGs, whose behaviour is largely determined by physics, IBR behaviour is shaped by its internal control system which is extremely flexible and far from being standardized as yet \cite{li2021impedance}. Manufacturers regard their control algorithms as critical proprietary technology and prefer to disclose to system operators only black-box models. 

The black-box models are usually given in the form of binary library files compiled from control algorithms, which are invoked by a numerical solver for time-domain simulation. The binary models can represent nonlinear dynamics with high fidelity, and therefore provides convincing results for stability validation. However, time-domain simulation is time-consuming and the results are not interpretive regarding the underlying mechanism. As an alternative to binary black-box models in time domain, attention recently falls upon impedance (or equivalently, admittance) models in frequency domain \cite{wang2017unified}. Impedance models represent linearized small-signal dynamics via the frequency spectra of input-output relationships and therefore are also black-box models. Impedance models offer some extent of interpretability in the form of resonance peaks and stability margins, but such interpretability is limited to simple systems with explicit bipartition, e.g. a single IBR connected to an infinite bus \cite{gu2019motion}.

Due to the lack of interpretability of black-box models, system operators prefer white-box models for stability analysis. White-box models are state-space equations containing all of the physical and control states of each apparatus. The eigenvectors of state-space matrix yield the participation factor indicating correlations between states and modes. The participation factor can be used to trace the root-cause of unstable or under-damped oscillations and thus provide very high interpretability \cite{rommes2008computing,Sinha2019}.

To bridge the gap between the black-box models divulged by manufacturers and the white-box models desired by operators, efforts have been made to implement participation analysis in impedance models. A whole-system impedance model is presented in \cite{gu2020impedance}, which formulates the dynamics of the grid as the whole-system admittance or impedance matrix. All of the elements in the admittance and impedance matrices share the same poles (equivalent to the eigenvalues of the system) and therefore contain the same information regarding system stability, but different elements may have different resonant peaks at each of the poles which reflect their differing participation in the corresponding mode. Further to the whole-system model, \cite{Huang2007,ebrahimzadeh2018bus,zhan2019frequency} introduce nodal participation factor and branch participation factor, based on the eigenvectors of the nodal admittance matrix and loop impedance matrix. These two approaches open up paths towards participation analysis with impedance model, but the relationship between the participation factor in impedance model and that in state-space model is still unclear. As a result, the participation factor in impedance model, defined in this way, cannot be used to look inside a black-box and so cannot determine exactly which state causes an instability, nor indicate how internal parameters should be re-tuned to stabilize the system. 

In this paper, we first prove that the sensitivity of state-space eigenvalue with respect to apparatus impedance equals the residue of the whole-system admittance seen at this apparatus. Here we use apparatus to refer to any device connected in shunt at a node, such as SG, synchronous condenser, IBR, active load, or FACTS device. Based on this finding, we define the residue as impedance participation factor in a way that is consistent with the classic state participation factor. We further prove the chain-rule for the impedance participation factor, which links the state and the parameter participation factors. This key step allows one to look inside black-box models without disclosing the internal details. To highlight this feature, we name our method the \textit{grey-box approach}. The proposed grey-box approach has not only a rigorous mathematical basis but also good potential for practical applications. The residues can be measured in the field or estimated from black-box impedance spectra with very little prior knowledge and few presumptions.

The paper is organized as follows. The whole-system admittance model is briefly introduced in Section II. The theory of the impedance participation factor and the corresponding grey-box approach is presented in Section III. The major findings are verified and applications are demonstrated on a NETS-NYPS 68-bus system in Section IV. The last section concludes the paper.

\section{Whole-System Admittance Model} \label{section2}
This section briefly describes the whole-system admittance model proposed in \cite{gu2020impedance} based on which participation analysis will be conducted in the next section.  
In a \textit{K}-node meshed network, as illustrated in \figref{fig1_wholesys}, virtual voltage injections $\hat{v} = [\hat{v}_1, \hat{v}_2, \cdots, \hat{v}_K]^\top$ are introduced in series with each item of apparatus in the system, such as a SG or an IBR, to create perturbations in the corresponding currents $\hat{i} = [\hat{i}_1, \hat{i}_2, \cdots, \hat{i}_K]^\top$. The transfer function matrix from $\hat{v}$ to $\hat{i}$ is called the whole-system admittance $\hat{Y}$, that is, ${\hat{i}(s)}={\hat{Y}}(s) \cdot {\hat{v}}(s)$. Following \cite{gu2020impedance}, $\hat{Y}$ is given by
\begin{equation}
\label{Y_whole_system}
{\hat{Y}}=\left( {I}+{Y}_{\text{net}}{Z} \right) ^{-1}{Y}_{\text{net}}
\end{equation}
where
\begin{equation}
{Z}=\text{diag}\left( Z_{1},Z_{2} ,\cdots ,Z_{K} \right)
\end{equation}
is a diagonal matrix containing the terminal impedance of all apparatus, and $Y_{\text{net}}$ is the nodal admittance matrix of the network.

$\hat{Y}$ is a $K \times K$ transfer function matrix. All elements of $\hat{Y}$ share the same poles which are equivalent to the eigenvalues of the system \cite{gu2020impedance}. On the other hand, different elements may have different levels of excitation and thus different magnitude of resonant peaks at each pole of the system which reflect the different participation levels in each mode. This characteristic serves as the basis to extract precise participation factors from $\hat{Y}$ matrix which is the central theme of this paper.

The diagonal elements of $\hat{Y}$ have a clear physical meaning. Taking the first diagonal element $\hat{Y}_{11}$ as an example, as illustrated in \figref{fig1_wholesys}, $\hat{Y}_{11}$ is essentially the admittance for the loop containing ${Z}_{1}$ and ${Z}_{\text{g}1}$, in which ${Z}_{1}$ is the terminal impedance of the apparatus at the first node and ${Z}_{\text{g}1}$ is the impedance for the rest of the grid as seen by ${Z}_{1}$ at that node. Therefore, we have $\hat{Y}_{11} = ({Z}_{1} + {Z}_{\text{g}1})^{-1}$ and the same principle holds for any node $k$
\begin{equation}
\label{loop_y_z}
    \hat{Y}_{kk} = ({Z}_{k} + {Z}_{\text{g}k})^{-1}.
\end{equation}
As we shall see in the next section, this property is very useful for impedance-based participation analysis.

\begin{figure}
\centering
{\includegraphics[scale = 0.8]{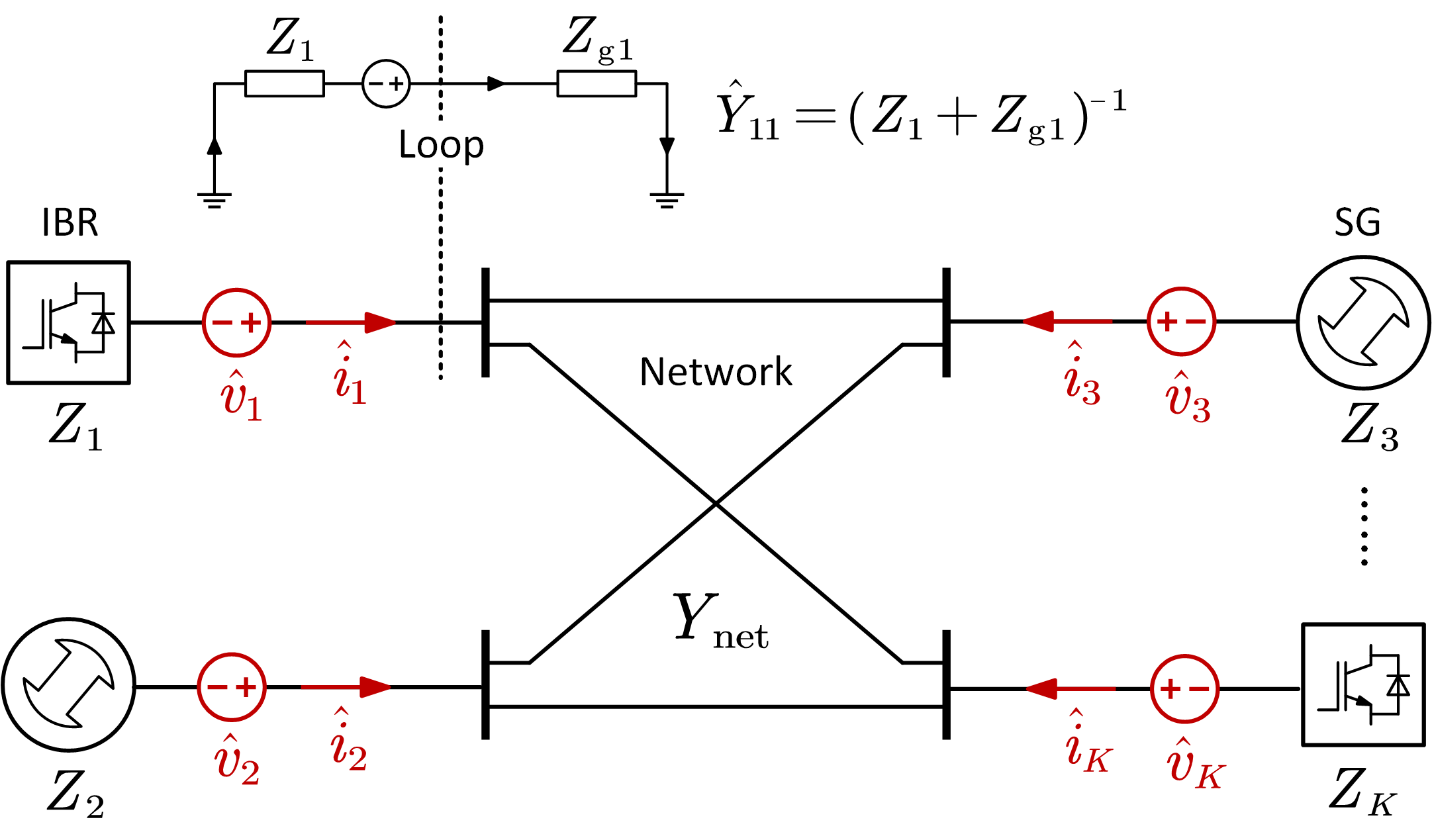}}
\caption{Illustration of the whole-system admittance model. Virtual voltage injection $\hat{v}$ to excite current perturbation $\hat{i}$.}
\label{fig1_wholesys}
\end{figure}


\section{Impedance Participation Factor and the Grey-Box Approach}

Before proceeding to impedance-based participation analysis, we review the classic state-based participation analysis so that the relationship between them is revealed to readers.

\subsection{State Participation Factor}
The dynamics of a power system linearized around its equilibrium point can be represented by a state equation
\begin{equation}
\label{state_equation}
\dot{x} = A x.
\end{equation}
This state equation is usually very high-order but can be decomposed into a series of first-order equivalents via coordinate transformations $z = \Psi x$ and $x = \Phi z$ ($\Phi = \Psi^{-1}$) such that the state matrix in the new coordinate $z$ is diagonalized \cite{kundur1994power}, that is,
\begin{equation}
\label{z_equation}
\dot{z} = \Lambda z, \  \Lambda = \Psi A \Phi = \diag{(\lambda_1,\lambda_2,\cdots,\lambda_N)}
\end{equation}
where $\lambda_n$ ($n=1,2,\cdots,N$) is the $n$-th eigenvalue of $A$, and the rows and columns of $\Psi$ and $\Phi$ correspond to the left- and right-eigenvectors of $A$, respectively. $z$ is a vector of modes of the system and determines stability according to the corresponding eigenvalues. The transformation matrices $\Psi$ and $\Phi$ describe the back-and-forth correlations between modes $z$ and states $x$. Putting $\Psi$ and $\Phi$ together, the participation factor of the $m$-th state $x_m$ in the $n$-th mode $z_n$ is defined as \cite{Perezarriaga1982}
\begin{equation}
\label{state_pf}
    p_{mn} = \psi_{nm} \phi_{mn},
\end{equation}
 where $\psi_{nm}$ and $\phi_{mn}$ are elements of $\Psi$ and $\Phi$, respectively. In this paper, $p_{mn}$ is renamed \textit{state participation factor} to distinguish it from the \textit{impedance participation factor} to be introduced in the next subsection.

The state participation factor in (\ref{state_pf}) is defined heuristically but is endowed with a rigorous mathematical meaning due to its linkage to eigenvalue sensitivity \cite{kundur1994power}, that is
\begin{equation}
\label{eigen_sensitivity}
    p_{mn} = \frac{\partial \lambda_n}{\partial a_{mm}} \Rightarrow \Delta \lambda_n = p_{mn} \Delta a_{mm}. 
\end{equation}
In (\ref{eigen_sensitivity}), $a_{mm}$ is the $m$-th diagonal element of the state matrix $A$ and can be interpreted as the local dynamics for $x_m$ itself as if decoupled from the other states. In contrast, $\lambda_n$ describes the global dynamics for the whole system. Therefore, (\ref{eigen_sensitivity}) implies that $p_{mn}$ determines how the local dynamics affect the global dynamics. In the light of this interpretation, state participation factor can not only identify which local states are participating in a particular whole-system mode, but can also indicate how local parameters should be re-tuned to better damp that mode. For this reason, the state participation factor has become an important tool in stability analysis enabling tracing of root-causes, and trouble-shooting of complex power systems \cite{kundur1994power,Perezarriaga1982,rommes2008computing}.

We now add inputs and outputs to the state equation (\ref{state_equation}) to see how the state participation factor is reflected in the transfer functions of the system. For a system with input $u$ and output $y$ 
\begin{equation}
\label{state_equation_io}
\begin{array}{l}
\dot{x} = A x + Bu \\
y = C x
\end{array}
\end{equation}
the transfer function from $u(s)$ to $y(s)$ is 
\begin{equation}
\label{state_to_tf}
G(s) = C(sI-A)^{-1}B =  C \Phi (sI-\Lambda)^{-1} \Psi B.
\end{equation}
For the sake of brevity, we take a case of a  single-input and single-output (SISO) system, i.e., $u(s)$ and $y(s)$ are scalars. In such a case, $G(s)$ can be expanded as
\begin{equation}
\label{eq_tf}
G(s) = \sum_{n=1}^{N} \frac{\text{Res}_{\lambda_n}}{s-\lambda_n} 
\end{equation}
where 
\begin{equation}
\mathrm{Res}_{\lambda _n}=\sum_{m=1}^N{c_m}\phi _{mn}\sum_{m=1}^N{\psi _{nm}b_m} 
\end{equation}
is the residue of $G(s)$ at $\lambda_n$, and $b_m$ and $c_m$ are elements of $B$ and $C$, respectively. It is clear that the eigenvalues of the state matrix appear as poles in the transfer function. If $C$ and $B$ are in such a form that their $m$-th elements equal 1 and all other elements equal 0, $\text{Res}_{\lambda_n}$ can be simplified to 
\begin{equation}
\label{hint}
\text{Res}_{\lambda_n} = \psi_{nm} \phi_{mn}  = p_{mn}.
\end{equation}
In this special case, the state participation factor is the same as the residue of the transfer function, an observation which hints at residues being useful for participation analysis. However, (\ref{hint}) is based on a strong assumption about $B$ and $C$ which may not hold for common cases. In the next subsection, we explore the general residue-participation relationship which yields the impedance participation factor.

\subsection{Impedance Participation Factor}
In order to clarify the roles of residues in participation analysis in a general form, we introduce Lemma \ref{Lemma}.

\begin{lemma}
\label{Lemma}
For a square transfer function matrix $G_\rho$ depending on parameters $\rho$, let $H_\rho$ be the inverse transfer function of $G_\rho$, i.e. $H_\rho = G_\rho^{-1}$, and $\lambda$ be a non-repeated pole of $G_\rho$. When the parameters $\rho$ are perturbed infinitesimally by $\Delta \rho$, $\lambda$ and $H_\rho$ are perturbed by $\Delta \lambda$ and $\Delta H_\rho$ correspondingly and we have the following relationship in between:
\begin{equation} \label{eq_lemma}
\Delta \lambda = \langle - \mathrm{Res}^{*}_\lambda G_\rho, \Delta H_\rho(\lambda) \rangle
\end{equation}
in which $\langle \cdot, \cdot\rangle$ is the Frobenius inner product of two matrices, $\mathrm{Res}_\lambda G_\rho$ is the residue matrix of $G_\rho$ at $\lambda$ (the residue operates element-wise on a matrix), $^*$ denotes the conjugate transpose of the residue matrix, and the equation holds in the sense of neglecting high-order infinitesimals. 
\end{lemma}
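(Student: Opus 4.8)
The plan is to reduce the statement about a pole of $G_\rho$ to a statement about the singularity of its inverse $H_\rho$, extract the rank-one structure of the residue at a simple pole, and then perturb the singularity condition and project it onto a null vector. First I would note that a pole of $G_\rho$ is exactly a point where $H_\rho = G_\rho^{-1}$ fails to be invertible, so a non-repeated pole $\lambda$ corresponds to $H_\rho(\lambda)$ having a one-dimensional null space. I would fix a right null vector $u$ and a left null vector $v$, i.e. $H_\rho(\lambda)\,u = 0$ and $v^\top H_\rho(\lambda) = 0$, which will carry all the relevant geometric information.

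Second, I would pin down the residue matrix. Writing the Laurent expansion $G_\rho(s) = \mathrm{Res}_\lambda G_\rho/(s-\lambda) + O(1)$ together with the Taylor expansion $H_\rho(s) = H_\rho(\lambda) + (s-\lambda)H_\rho'(\lambda) + \cdots$ and substituting into $H_\rho G_\rho = I$ (and $G_\rho H_\rho = I$), matching the $(s-\lambda)^{-1}$ coefficients forces the columns and rows of $\mathrm{Res}_\lambda G_\rho$ into the right and left null spaces of $H_\rho(\lambda)$, so the residue is rank one: $\mathrm{Res}_\lambda G_\rho = u\,q^\top$ with $q^\top$ proportional to $v^\top$. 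Matching the $(s-\lambda)^0$ coefficient and projecting onto $v^\top$ then fixes the scale, giving
\[
\mathrm{Res}_\lambda G_\rho = \frac{u\,v^\top}{v^\top H_\rho'(\lambda)\,u}.
\]
The non-repeated-pole hypothesis is precisely what guarantees the scalar $v^\top H_\rho'(\lambda)\,u$ is nonzero, so that this formula is well-defined; I expect this normalization to be the main technical obstacle.

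Third, I would perturb the singularity condition itself. Under $\rho \to \rho + \Delta\rho$ the pole moves to $\lambda + \Delta\lambda$ and the null vector to $u + \Delta u$, so $(H_\rho + \Delta H_\rho)(\lambda+\Delta\lambda)\,(u+\Delta u)=0$. Expanding to first order and cancelling the zeroth-order identity $H_\rho(\lambda)\,u = 0$ leaves
\[
H_\rho(\lambda)\,\Delta u + \Delta\lambda\,H_\rho'(\lambda)\,u + \Delta H_\rho(\lambda)\,u = 0 .
\]
Left-multiplying by $v^\top$ annihilates the unknown $\Delta u$ through $v^\top H_\rho(\lambda)=0$ and isolates
\[
\Delta\lambda = -\,\frac{v^\top \Delta H_\rho(\lambda)\,u}{v^\top H_\rho'(\lambda)\,u}.
\]

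Finally, I would recast this scalar as the claimed Frobenius inner product. Using $v^\top \Delta H_\rho(\lambda)\,u = \mathrm{tr}\!\left(u\,v^\top \Delta H_\rho(\lambda)\right)$ and the residue formula, the numerator and denominator combine so that $\Delta\lambda = -\,\mathrm{tr}\!\left(\mathrm{Res}_\lambda G_\rho\,\Delta H_\rho(\lambda)\right)$. The last step is pure bookkeeping: with $\langle A, B\rangle = \mathrm{tr}(A^* B)$, the conjugate transpose placed on the residue cancels the conjugation built into the Frobenius product, yielding $-\,\mathrm{tr}\!\left(\mathrm{Res}_\lambda G_\rho\,\Delta H_\rho(\lambda)\right) = \langle -\,\mathrm{Res}^{*}_\lambda G_\rho,\ \Delta H_\rho(\lambda)\rangle$, which is exactly (\ref{eq_lemma}). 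The hard part will be the residue normalization in the second step; the conjugation accounting in this final step is routine once the rank-one structure is in hand, though it is the second place where sign and convention slips are easy to make.
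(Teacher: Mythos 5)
Your proof is correct, but it follows a genuinely different route from the paper's. The paper never touches null vectors: it first proves the scalar case by Taylor-expanding the zero condition $H_{\rho+\Delta\rho}(\lambda+\Delta\lambda)=0$ and identifying $\mathrm{Res}_\lambda G_\rho = 1/H'_\rho(\lambda)$ via L'H\^opital, then reduces the matrix case to the scalar one by applying that result to the determinant $H_{\mathrm{det}}=\det H_\rho$, using the cofactor expansion $\Delta H_{\mathrm{det}} = \langle \overline{F}_\rho, \Delta H_\rho\rangle$ and the adjugate identity $H_\rho^{-1}=F_\rho^\top/H_{\mathrm{det}}$ to convert the cofactor matrix into the residue. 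You instead work directly with the singularity of $H_\rho(\lambda)$: you extract left/right null vectors $u,v$, establish the Keldysh-type rank-one residue formula $\mathrm{Res}_\lambda G_\rho = uv^\top/(v^\top H'_\rho(\lambda)u)$ by Laurent--Taylor matching in $H_\rho G_\rho = I$ and $G_\rho H_\rho = I$, and obtain $\Delta\lambda = -v^\top \Delta H_\rho(\lambda)u/(v^\top H'_\rho(\lambda)u)$ by perturbing the null-vector equation and projecting with $v^\top$; the trace/conjugation bookkeeping at the end is right, since the paper's inner product is $\langle V,W\rangle = \mathrm{tr}(V^*W)$ so both proofs ultimately land on $\Delta\lambda = -\mathrm{tr}\bigl(\mathrm{Res}_\lambda G_\rho\,\Delta H_\rho(\lambda)\bigr)$. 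The paper's determinant route is more elementary and self-contained (no rank-one structure needed, and the matrix case literally reuses the scalar case); your route buys structural insight: it exposes the rank-one nature of the residue, makes the role of the simple-pole hypothesis explicit (one-dimensional kernel plus $v^\top H'_\rho(\lambda)u\neq 0$, i.e.\ a simple zero of $\det H_\rho$), and is the natural generalization of the classical eigenvalue-sensitivity formula $\Delta\lambda = v^\top\Delta A\,u/(v^\top u)$, which ties more directly to the state-space eigenvector machinery ($\Psi$, $\Phi$) used elsewhere in the paper. One reading caution, which applies equally to the paper's own proof: ``non-repeated pole'' must be interpreted as algebraic multiplicity one (simple zero of $\det H_\rho$), not merely entrywise simplicity of the poles of $G_\rho$ --- e.g.\ $G_\rho = \mathrm{diag}\left(\tfrac{1}{s-\lambda},\tfrac{1}{s-\lambda}\right)$ has entrywise simple poles but a two-dimensional kernel of $H_\rho(\lambda)$, and both your argument and the paper's (which needs $H'_{\mathrm{det}}(\lambda)\neq 0$) would fail there; your statement that the non-repeated-pole hypothesis ``guarantees'' $v^\top H'_\rho(\lambda)u \neq 0$ is exactly this identification and is consistent with the paper's usage.
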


\begin{proof}
The proof of the Lemma is given in Appendix B, and a brief introduction to the mathematical preliminaries used in this Lemma is included in Appendix A. A simple example to illustrate this Lemma is provided in Appendix C.
\end{proof}

If we take $G_\rho$ to be the whole-system admittance seen at node $k$ defined in Section II, that is, $G_\rho = \hat{Y}_{kk}$, the corresponding $H_\rho$ is
\begin{equation}
H_\rho = \hat{Y}_{kk}^{-1} = Z_{k} + Z_{\text{g}k}
\end{equation}
according to (\ref{loop_y_z}). When $Z_{k}$ itself is subject to a perturbation we can say that $Z_{\text{g}k}$, representing the rest of the grid, is unchanged, that is, $\Delta Z_{\text{g}k} = 0$, so we have
\begin{equation}
\Delta H_\rho = \Delta Z_{k} + \Delta Z_{\text{g}k} = \Delta Z_{k}
\end{equation}
and hence
\begin{equation} 
\label{sense_z}
\Delta \lambda = \langle -\text{Res}^{*}_\lambda \hat{Y}_{kk}, \Delta Z_{k}(\lambda) \rangle.
\end{equation}
As explained in the previous subsection, the poles of the transfer function $\hat{Y}_{kk}$ are exactly the eigenvalues of the system. Therefore, (\ref{sense_z}) implies that the sensitivity of an eigenvalue to an apparatus impedance is determined by the residue of the whole-system admittance seen by the same apparatus. Since sensitivity is equivalent to participation, we define the residue as the \textit{impedance participation factor}
\begin{equation} 
\label{impedance_participation}
p_{\lambda,Z_{k}} \triangleq -\text{Res}^{*}_\lambda \hat{Y}_{kk}
\end{equation}
such that
\begin{equation} 
\label{pf_z_lambda}
\Delta \lambda = \langle p_{\lambda,Z_{k}}, \Delta Z_{k}(\lambda) \rangle.
\end{equation}

Due to impedance-admittance duality, we can similarly define the \textit{admittance participation factor} as the residue of the whole-system impedance
\begin{equation} 
p_{\lambda,Y_{k}} \triangleq -\text{Res}^{*}_\lambda \hat{Z}_{kk}
\end{equation}
such that
\begin{equation} 
\Delta \lambda = \langle p_{\lambda,Y_{k}}, \Delta Y_{k}(\lambda) \rangle
\end{equation}
where $Y_{k}$ is the admittance of the $k$-th apparatus and $\hat{Z}_{kk}$ is the whole-system impedance \cite{gu2020impedance} at the $k$-th node. The impedance and admittance participation factors are equivalent theoretically but each may better serve different applications where impedance or admittance is more readily available. We focus on the impedance participation factor in this paper.

If we know the sensitivity of the apparatus impedance against its parameters $\rho$, that is, 
\begin{equation} 
\label{parameter_z_sens}
\Delta Z_{k}(\lambda) = \frac{\partial Z_{k}(\lambda)}{\partial \rho} \cdot \Delta \rho
\end{equation} 
we further define the \textit{parameter participation factor}
\begin{equation} 
\label{parameter_pf}
p_{\lambda,\rho} = \left< p_{\lambda, Z_{k}}, 
\frac{\partial Z_{k}(\lambda)}{\partial \rho} \right> 
\end{equation}
such that 
\begin{equation} 
\label{chain_rule}
\Delta \lambda = p_{\lambda,\rho} \Delta \rho.
\end{equation}
If $\rho$ is selected as $a_{mm}$ in the state matrix $A$, the corresponding parameter participation factor is the state participation factor 
\begin{equation}
\label{pf_relationships}
p_{mn} =
p_{\lambda_n,a_{mm}} = \left< p_{\lambda_n, Z_{k}},
\frac{\partial Z_{k}(\lambda_n)}{\partial a_{mm}} \right>. 
\end{equation}
Thus we establish the relationship between the different types of participation factor and summarize this relationship in \figref{fig_pf_summary}. The state coefficient $a_{mm}$ might be the combination of multiple physical or control parameters and thus is not an independent parameter itself. We treat $a_{mm}$ as a virtual parameter to draw the linkage between state and parameter participation factors, as marked by the dashed arrow in \figref{fig_pf_summary}.

\begin{figure}
\centering
\includegraphics[scale=0.72]{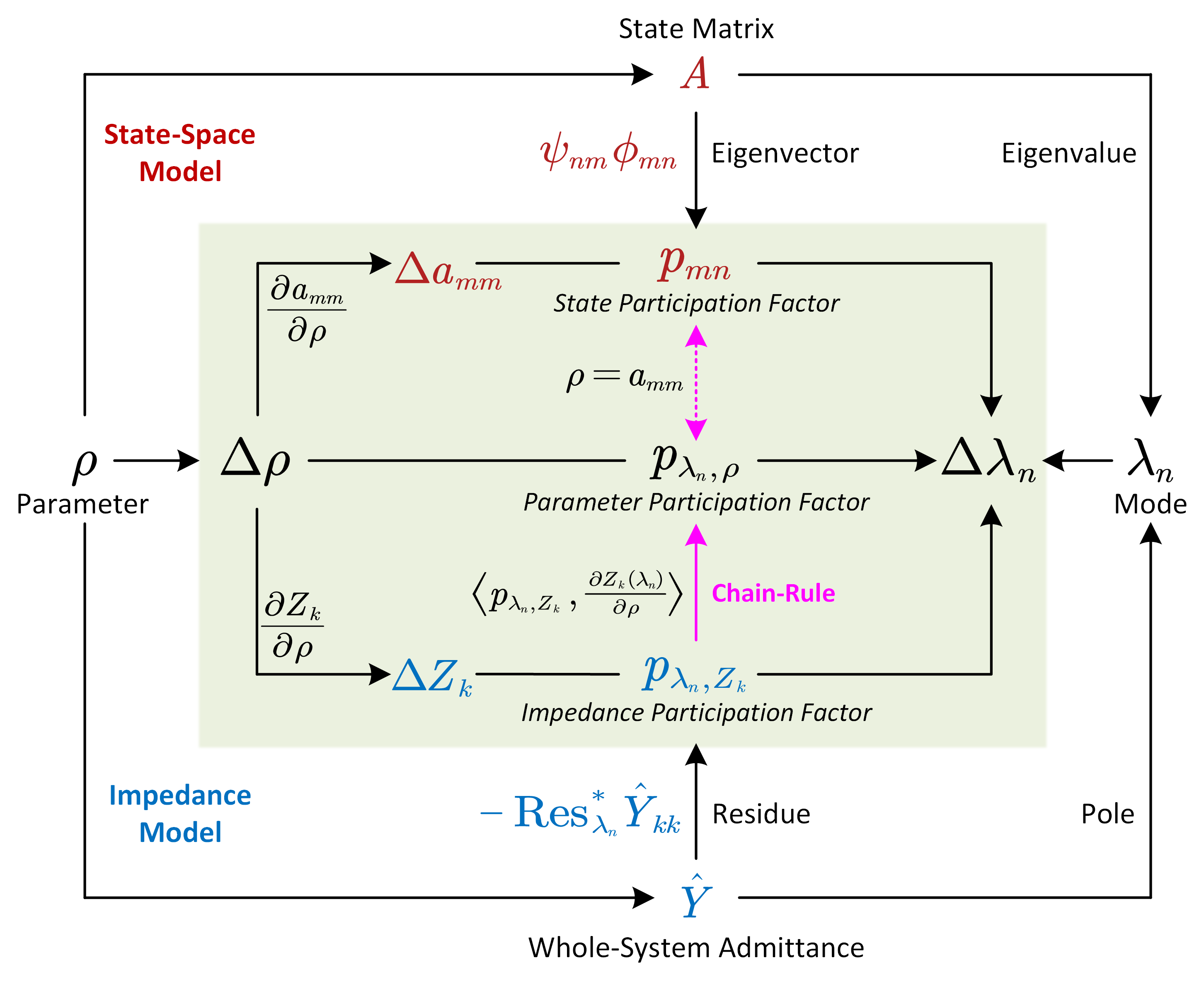}
\caption{The relationship between impedance participation factor and state participation factor and the chain-rule for participation propagation.}
\label{fig_pf_summary}
\end{figure}

Equations (\ref{parameter_pf}) and (\ref{pf_relationships}) are called the chain-rule of participation factors which is of profound importance in participation analysis. The impedance participation factor enables us to evaluate the participation of an apparatus in system oscillations through only black-box models. On top of this, the chain-rule yields the state and parameter participation factors, further enabling us to look inside the black-box and trace root-causes to detailed parameters and states without disclosing the state equation. Based on this chain-rule of participation factors, we proposed the grey-box approach for power system stability analysis, as will be described in the following subsection.

\subsection{The Grey-Box Approach}
The grey-box approach contains three layers with different transparencies according to the available prior knowledge, as illustrated in \figref{fig_greybox}. The higher the transparency, the more prior knowledge is needed but along with that comes more useful information for root-cause tracing and trouble-shooting in whole-system stability analysis. Now we describe in detail each layer and the relationships between them.

\begin{figure}
\centering
\includegraphics[scale=0.75]{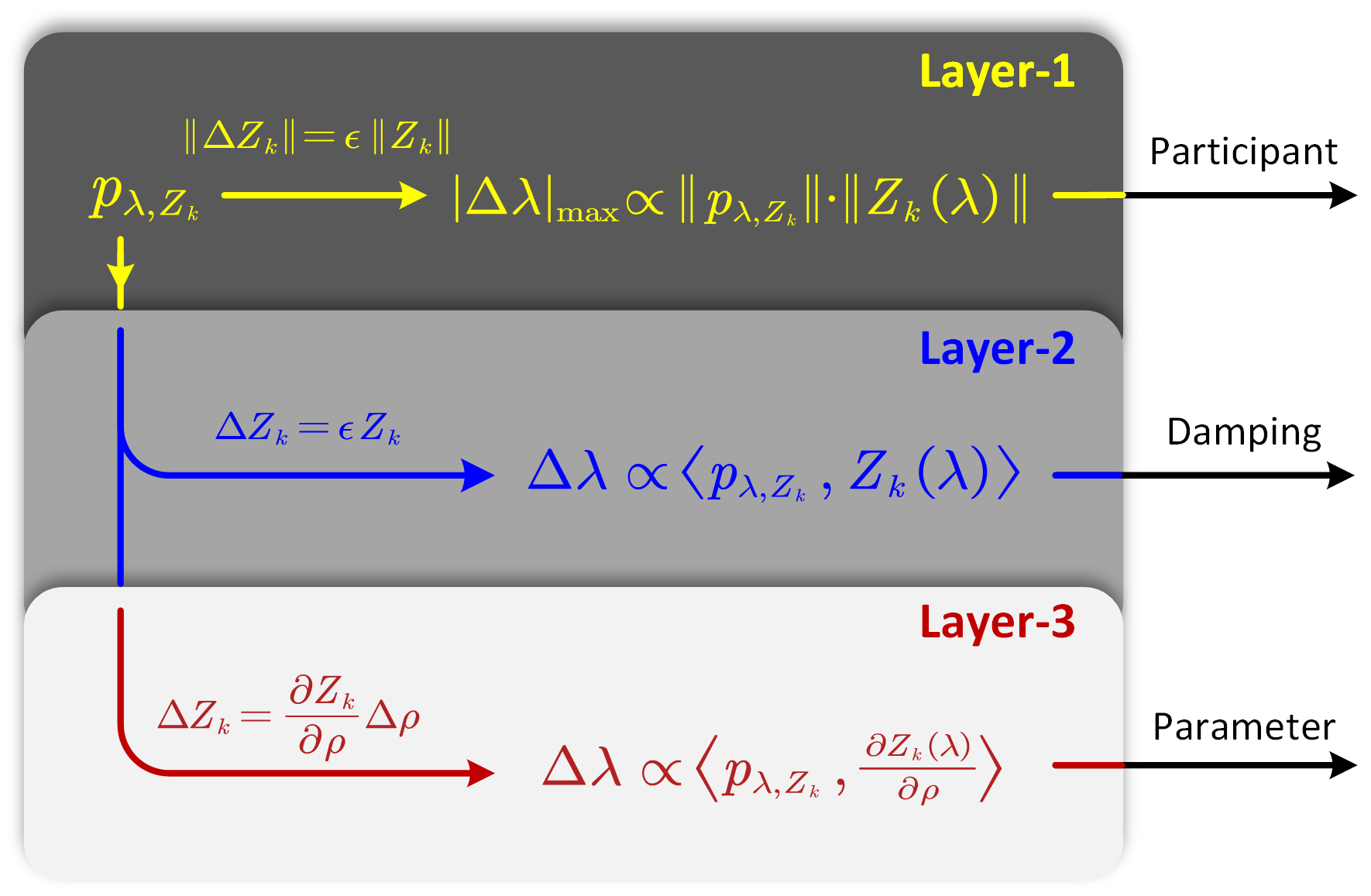}
\caption{Illustration of the three-layer grey-box. In Layer-1, estimates of the potential participants are created based on the upper bound of $\Delta \lambda$ subject to $\| \Delta Z_k \|=\epsilon \| Z_k \|$. In Layer-2, the contribution of a participant to mode damping is estimated based on the real part of $\Delta \lambda$ subject to $ \Delta Z_k =\epsilon  Z_k $. In Layer-3, the root-cause of instability within the participating apparatus is identified, and parameter re-tuning facilitated, via the impedance-parameter sensitivity.}
\label{fig_greybox}
\end{figure}

\subsubsection{Grey-Box Layer-1} 
In this first layer, the only prior knowledge available is the impedance $Z_k$ for all apparatus in the system, along with the nodal admittance matrix $Y_\text{net}$ determined by the topology and line impedance of the network. The impedance $Z_k$ can be provided either in the form of transfer functions $Z_k(s)$ or frequency spectra $Z_k(j\omega)$. For transfer functions, the impedance participation factor $p_{\lambda_n,Z_k}$ can be calculated directly, but for frequency spectra, the impedance participation factor needs to be estimated indirectly.

Due to the three-wire, three-phase nature of power systems, $Z_k$ and $p_{\lambda_n,Z_k}$ are $2 \times 2$ matrix blocks in the synchronous $dq$ frame. We need a scalar index to represent the four elements of $p_{\lambda_n,Z_k}$ collectively so that different $p_{\lambda_n,Z_k}$ can be compared and the location of the dominant apparatus (for a given mode) determined. To this end, we assign a consistent magnitude perturbation to each $Z_k$ and observe the effect on the eigenvalue $\lambda$. The perturbation is normalized to $\| Z_k \|$ so that it scales with the corresponding apparatus, that is, $\| \Delta Z_k \| = \epsilon \| Z_k \|$, where $\epsilon$ is a small positive constant. According to the Cauchy inequality, we have
\begin{equation}
\label{cauchy}
|\Delta \lambda| = |\langle p_{\lambda,Z_{k}}, \Delta Z_{k}(\lambda) \rangle |
 \leq \|  p_{\lambda,Z_{k}} \| \cdot \|  \Delta Z_{k}(\lambda) \|
\end{equation}
which yields
\begin{equation}
\label{lambda_max}
|\Delta \lambda|_\text{max} = 
\|  p_{\lambda,Z_{k}} \| \cdot \|  \Delta Z_{k}(\lambda) \| =
\epsilon \|  p_{\lambda,Z_{k}} \| \cdot \|  Z_{k}(\lambda) \|.
\end{equation}
It is clear from (\ref{lambda_max}) that $\| p_{\lambda,Z_{k}} \| \cdot \|Z_k(\lambda)\|$ determines the upper bound of 
$\Delta \lambda$, $|\Delta \lambda|_\text{max}$, 
which is the maximum possible participation of the corresponding apparatus. Only apparatus with relatively large $| \Delta \lambda |_ \text{max}$ may possibly, but not necessarily, participate in the $\lambda$-mode. Thus, we use $\| p_{\lambda,Z_{k}} \| \cdot \|Z_k(\lambda)\|$ as the primary participation index in Layer-1 of the grey-box approach. Layer-1 roughly identifies potential participants in a mode and does so with very little prior knowledge. 

\subsubsection{Grey-Box Layer-2}
Building on Layer-1, Layer-2 adds a stipulation that the perturbation of apparatus impedance $\Delta Z_k$ is aligned to the original impedance $Z_k$, that is, $\Delta Z_k = \epsilon \, Z_k$, 
where $\epsilon$ is a very small positive real number. This is a reasonable stipulation because it emulates the effect of scaling up or down an apparatus so the resulted impedance shrinks or grows in amplitude but maintains the same angle. The scaling can be done by changing the base power of an apparatus with per-unit parameters, or by connecting or disconnecting a portion of apparatus from a farm of identical apparatus (e.g. a wind farm). Layer-2 does not require extra information to Layer-1 but adds to Layer-1 by the extra stipulation on the orientation of $\Delta Z_k$, which yields the direction of $\Delta \lambda$ and thus brings additional knowledge about the node's impact on system damping:
\begin{equation}
\label{layer_2}
\begin{split}
\Delta \lambda &= \langle p_{\lambda,Z_{k}}, \Delta Z_{k}(\lambda) \rangle = \epsilon \left< p_{\lambda,Z_{k}}, {Z_{k}(\lambda)} \right>.
\end{split} 
\end{equation}
From (\ref{layer_2}) we see that $\Delta \lambda$ is determined by $\left< p_{\lambda,Z_{k}}, {Z_{k}(\lambda)} \right>$ which we use as the new participation index for Layer-2. If the real-part of $\left< p_{\lambda,Z_{k}}, {Z_{k}(\lambda)} \right>$ is positive, it implies that scaling up the corresponding apparatus connected at the node tends to stabilize the system, and vice versa.

\subsubsection{Grey-Box Layer-3} 
The final layer of the grey-box approach aims to look into the participating apparatus in order to identify which physical component or control loop in the apparatus is the root-cause of instability, and thus provide information on which parameter should be re-tuned, and how to re-tune it, to stabilize the system. 

Layer-3 uses the sensitivity of an impedance to its internal parameters, ${\partial Z_{k}}/{\partial \rho}$, so that a parameter perturbation is propagated to an impedance perturbation and further into an eigenvalue perturbation via the chain-rule (see (\ref{parameter_z_sens})-(\ref{chain_rule}) in Section III-B). The chain-rule yields the parameter participation factors that indicates which internal parameters can be re-tuned so that the particular eigenvalue will move into the desired direction on complex plane. Importantly, the impedance-parameter sensitivity ${\partial Z_{k}}/{\partial \rho}$ discloses little information regarding the internal design or control of an apparatus and yet it enables root-cause tracing inside the apparatus as effectively as a transparent white-box model. This is a great advantage of the Layer-3 grey-box approach. 

\subsection{Practical Implementation}

Now we discuss the detailed implementation of the grey-box approach in practical applications. Four sets of data are required before applying the approach, namely, $Z_k$, $\hat{Y}_{kk}$, $p_{\lambda,Z_k}$, and, for Layer-3 only, ${\partial Z_{k}}/{\partial \rho}$. 
All data are presented as frequency spectra, that is, numerical values for a range of frequencies, so as to avoid symbolic calculation and ensure scalability to large-scale systems \cite{Yang2019Automation}.

There are two routes to obtain $Z_k$. 
\subsubsection{Model-based Route}   
This route is relevant to manufacturers who have available detailed analytical models that preserve every state. 
Dynamic differential equations can be derived from such models and can be linearized around an equilibrium point to obtain state-space matrices \cite{LinearizationMatlab}. The state-space matrices are then transformed into transfer functions using (\ref{eq_tf}) which yield frequency spectra by letting $s=j\omega$ in the transfer functions. This route is used for the case studies in this paper and readers may refer to \cite{Simplex} for detailed codes.

\begin{figure*}
	\centering
	\includegraphics[width=5in]{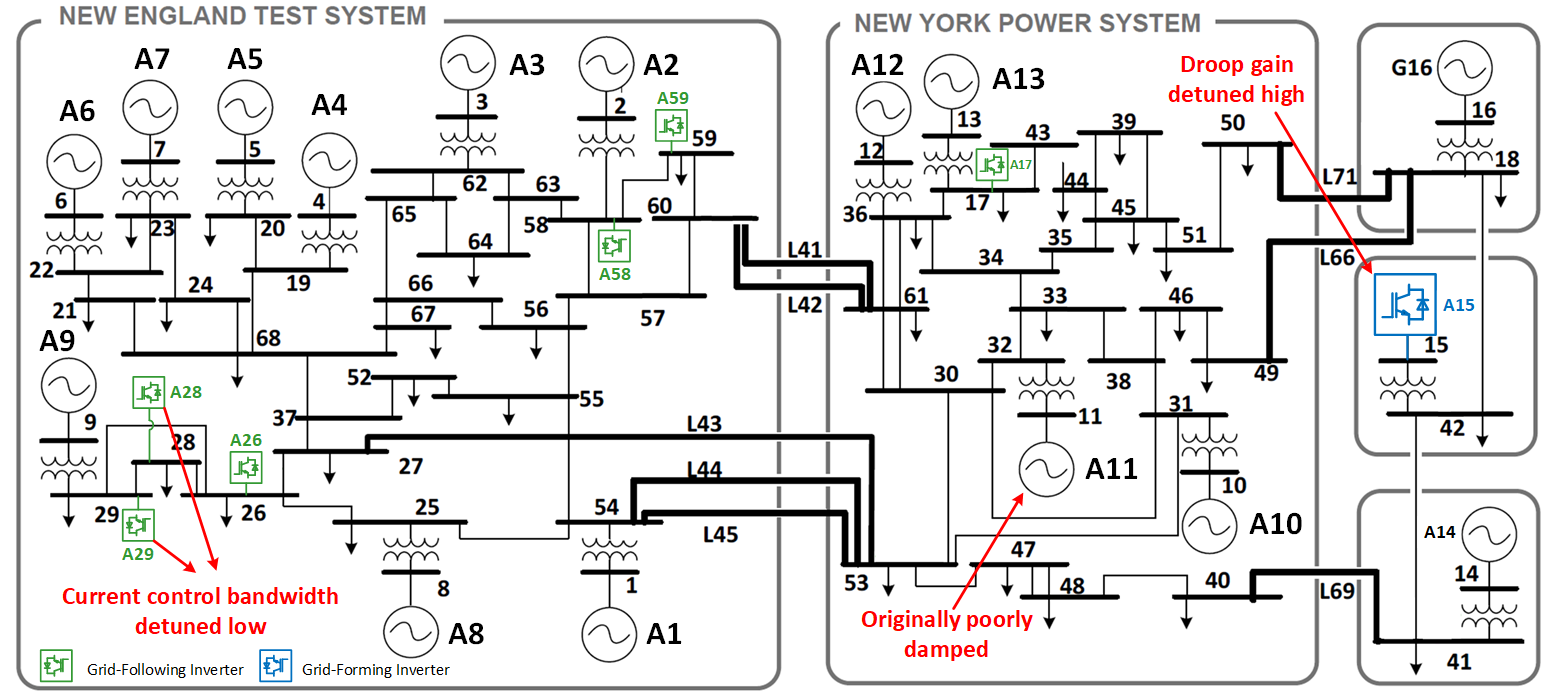}
	\caption{Modified NETS-NYPS 68 bus system, with 6 extra grid-following inverters, and a grid-forming inverter replacing the synchronous machine at bus 15.}
	\label{NETSNYPS68}
\end{figure*}

\subsubsection{Measurement-Based Route}

This route is applicable to both manufacturers and operators. An extra perturbation signal is applied to the inputs of the apparatus under test and the resulting perturbations in outputs are measured. The ratios of outputs to inputs in frequency domain are the spectra of the impedance (or admittance, depending on how the inputs and outputs are selected). The perturbation signal can either be a pseudorandom signal \cite{Roinila2013Broadband} or a swept-frequency signal \cite{Familiant2009New} to ensure sufficient excitation in all frequencies within the range. The measurement can be conducted in either hardware experiments or in electromagnetic transient (EMT) simulation. This route avoids hand-written differential equations, but the downside is that the measurement can be time-consuming and sensitive to measurement noise.
	
The impedance $Z_k$ is dependent upon parameter $\rho$, from which we calculate ${\partial Z_{k}}/{\partial \rho}$ numerically by
\begin{equation} 
\frac{\partial Z_k }{\partial \rho} \approx \frac{Z_{k, \Delta \rho} - Z_{k} }{\Delta \rho}
\end{equation}
where $\Delta \rho$ is a small perturbation and $Z_{k, \Delta \rho}$ is the impedance under the effect of the perturbation. $\Delta \rho$ is selected as $\Delta \rho = 10^{-5}(1+|\rho|)$ following \cite{LinearizationMatlab} as a trade-off between the relative and absolute definitions of perturbation. $Z_{k, \Delta \rho}$ must take into account the changes in the equilibrium point caused by the parameter perturbation which in turn affects impedance during linearization.  

Linking all $Z_k$ in the network according to (\ref{Y_whole_system}) yields the whole-system admittance $\hat Y_{kk}$. The poles and residues of $\hat Y_{kk}$ can be identified from its spectra by rational approximation \cite{Gustavsen1999}. The most time-consuming calculations in the grey-box approach are numerical matrix multiplications and inversions which have computational complexity of $O(N^3)$ or less using commonly available algorithms \cite{boyd2018introduction}. This is comparable to the computational complexity of the Newton-Raphson method commonly used in static power flow analysis. Therefore, the scaling of the grey-box approach to large systems is likely to be acceptable.
	
It is worth noting that participation factors are essentially local sensitivities and therefore concern at small parameter perturbations. If the parameters need to be tuned over a large range, the grey-box approach needs to be applied iteratively over the path of parameter variations. Such iteration is applicable to continuous parameters but not to discrete and logic parameters. In a case where two poles are very close to each other or poles are repeated, the corresponding residues can not be distinguished so the grey-box approach cannot be applied directly. This issue can also be solved by iteration, that is, treating the two non-distinguishable poles as the same pole and perturb parameters until they separate.

The grey-box approach is intended for small-signal analysis to be complemented with EMT simulation for large-signal analysis. The grey-box approach is interpretive regarding the underlying structure of system dynamics, whereas EMT simulation retains all non-linearity and thus provides high fidelity validation over various transients.

\color{black}

\section{Case Study of a Composite Power System}

We now demonstrate the grey-box approach through a case study of a composite power system including IBRs. The chosen system, \figref{NETSNYPS68}, is based on the NETS-NYPS 68 bus system \cite{Pal_Report2013, Qi_Identification} with six additional IBRs (Type-IV wind farms) connected to buses 17, 26, 28, 29, 58, 59. The SG at bus 15 is replaced with a grid-forming inverter. Each IBR is an aggregate  with a scaled rating representing many individual IBR. To make the system prone to oscillation, the frequency droop gain of A15 is deliberately de-tuned high, and the current control bandwidths of A28 and A29 are de-tuned by $-40\%$ and $-56\%$ respectively. All SGs use the same parameters as \cite{Pal_Report2013}, meaning that A11 is poorly damped and the least stable generator in the system\cite{Qi_Identification}. The system data, the codes used to generate the simulation results, and all numerical results can be found at: \url{https://github.com/Future-Power-Networks/Publications}\cite{Simplex}.

The whole-system admittance of the network, constructed from the impedance of all apparatus and admittance of all of the network lines, is displayed in the bode plot in \figref{fig_admitt}. At each node, the whole-system admittance $\hat{Y}_{kk}$ is a $2 \times 2 $ matrix in the synchronous $dq$ frame, but only one of the four elements in the matrix is displayed since that is suficient to illustrate the characteristics of the system. Only the nodes with sources (SGs or IBRs) present are plotted because the other nodes are passive. Several resonant peaks appear in the bode plot, each representing an oscillation mode in the system. The mode at 60~Hz arises from the flux dynamics of windings and lines, and is a standard feature \cite{kundur1994power}. The modes around 1-2~Hz are rotor swing modes of the SGs, and the mode at high frequency is caused by the $LCL$ filter of the grid-forming inverter. 
Three modes, annotated 1 to 3 in \figref{fig_admitt}, are selected for further analysis, and we use the grey-box approach to trace the root-cause of these modes and find ways to damp the modes.

\begin{figure*}
	\centering
	\includegraphics[width=5in]{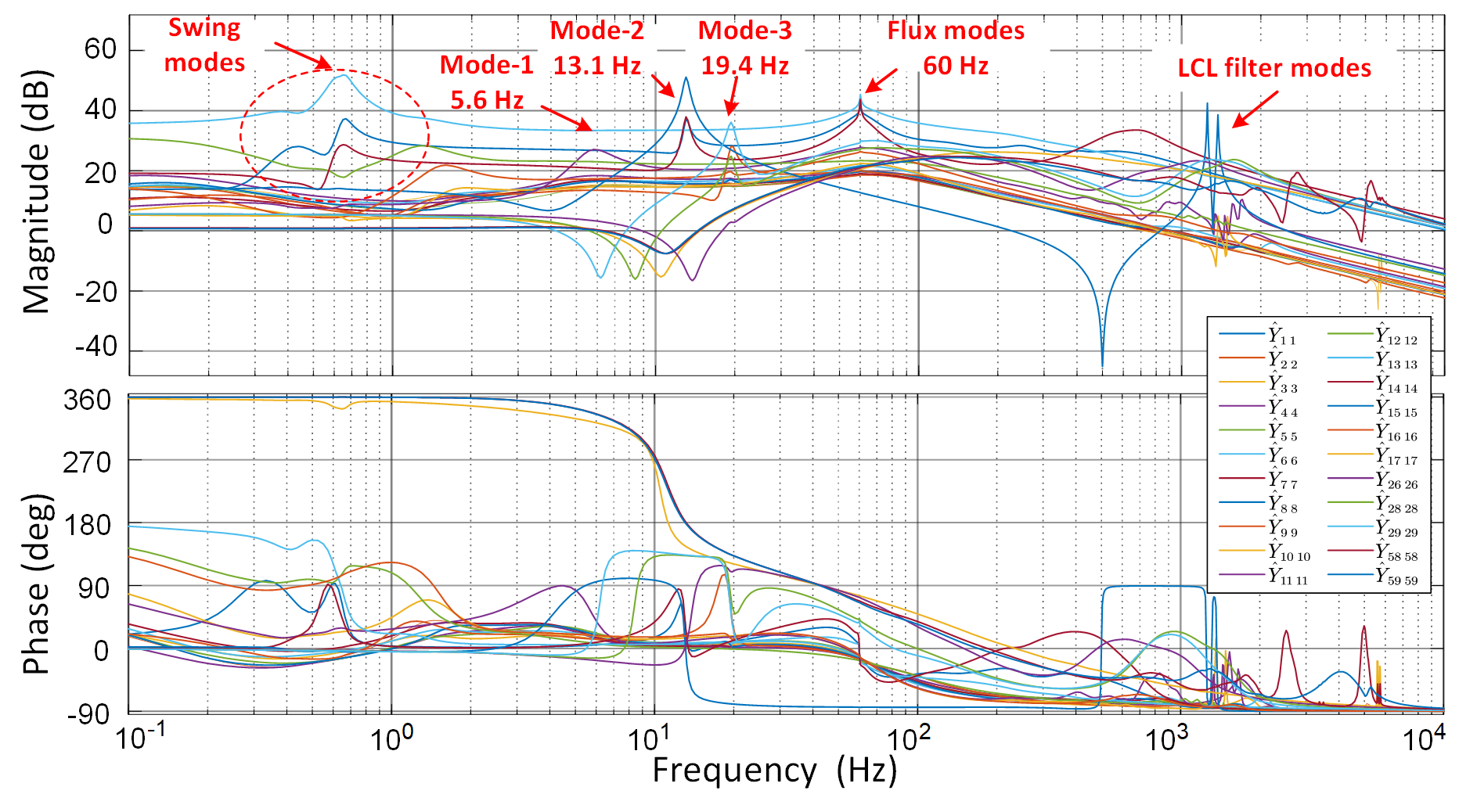}
	\caption{Bode diagram of whole system admittance $\hat{Y}_{kk}$ at nodes with sources, presented in $dd$ axis.}
	\label{fig_admitt}
	\vspace{-2mm}
\end{figure*}

\figref{fig_level2} shows the results of applying grey-box Layer-1 (on the left) and Layer-2 (on the right). For mode-1 (5.6~Hz), A11 stands out in the Layer-1 pie chart. Further, the breakdown into real and imaginary components in Layer-2 shows that A11 affects both the damping and natural frequency of mode-1 while the adjacent apparatus A10 and A12 also influence the damping of this mode. The negative real-part in Layer-2 indicates that scaling up the power rating of A15 (which is equivalent to decreasing its impedance) tends to destabilize the system. Similar analysis of mode-2 shows that A15 is dominant in this mode, affecting both the damping and natural frequency. Mode-3 is more complicated: Layer-1 reveals that there are multiple participants (A9, A28, and A29), and the Layer-2 decomposition shows that the SG (A9) and the IBR (A28, A29) have opposite signs for the component of $\Delta \lambda$. For A9, scaling up of power rating would improve stability whereas for A28 and A29 scaling up of the power rating decreases stability and indicates that mode 3 is an IBR-induced oscillation. Further, comparing A28 and A29, we see that A29 has a larger participation in this mode, which is attributable to the fact that A29 was de-tuned further than A28. The exact cause of the destabilization is not revealed until Layer-3 of the grey-box which can point to particular components and control parameters. Nonetheless, Layer-1 and Layer-2 grey-boxes reveal rich information about the root-causes of modes 1, 2 and 3 without significant prior knowledge.

\begin{figure}
	\centering
	\includegraphics[width=3.3in]{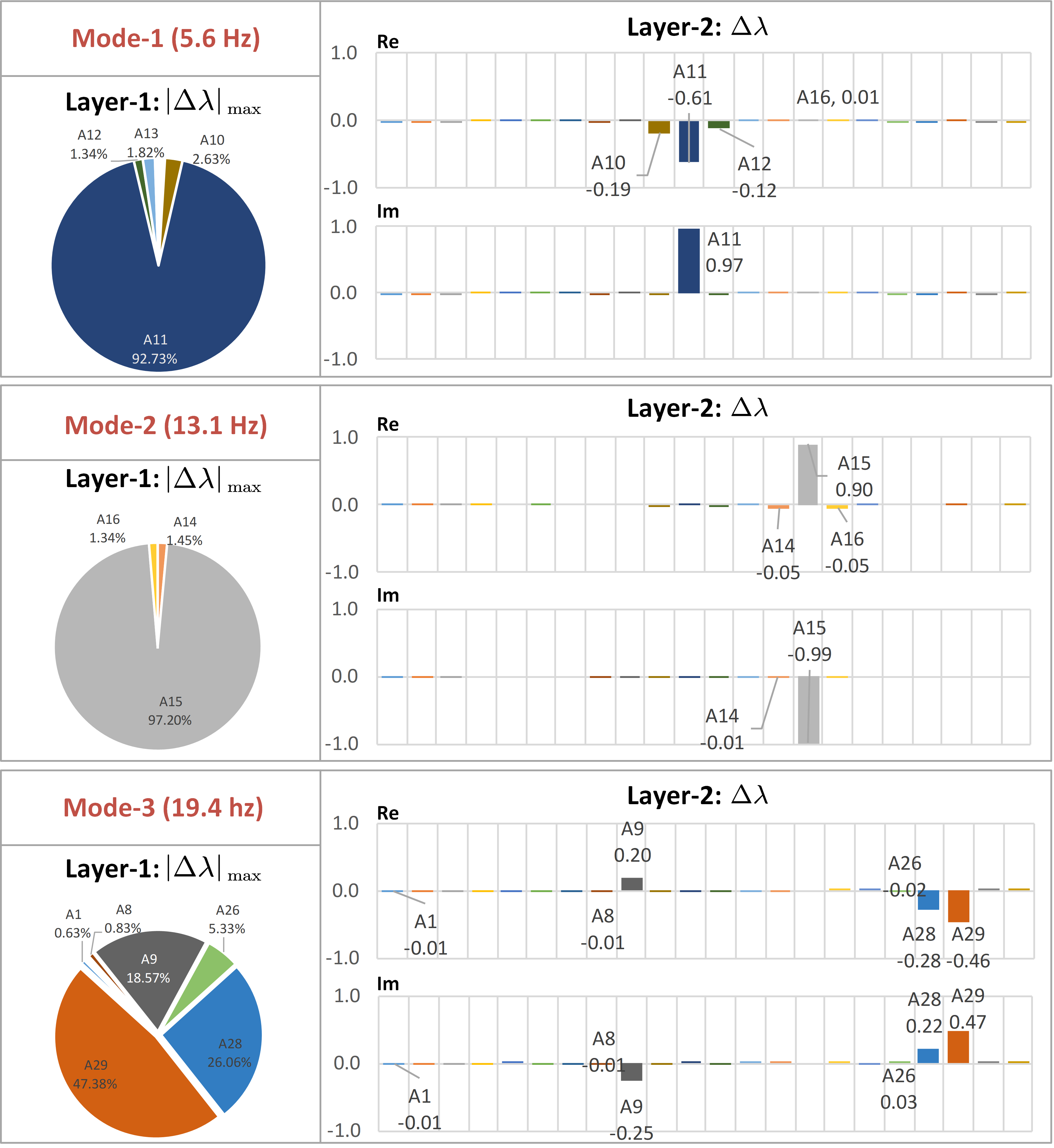}
	\caption{Participation analysis of the three under-damped modes using Layer-1 and Layer-2 of the grey-box, where the results in Layer-2 is normalized to the sum of absolute values.}
	\label{fig_level2}
	\vspace{-2mm}
\end{figure}

\begin{figure}
	\centering
	\includegraphics[width=3.0in]{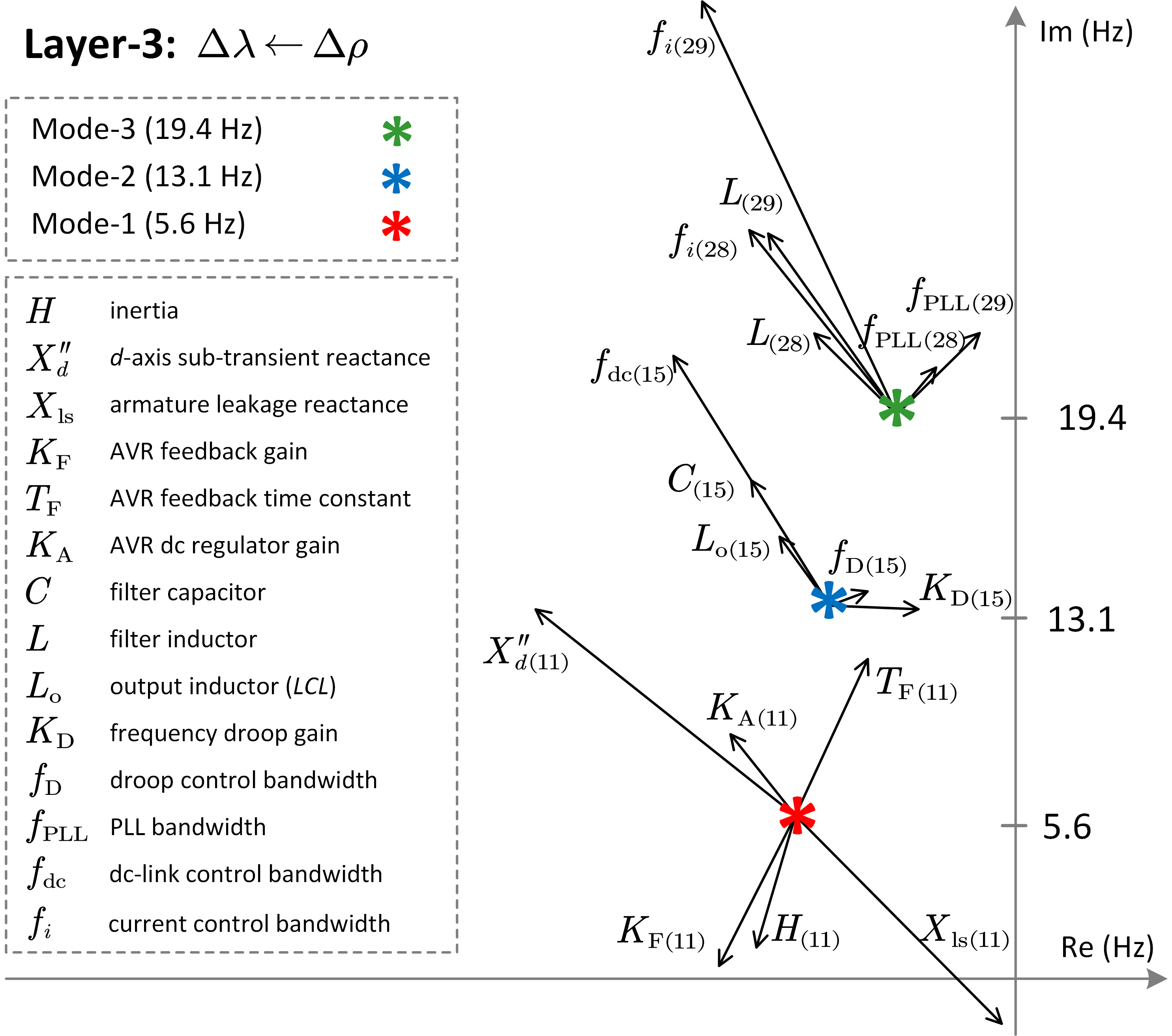}
	\caption{Grey-box Layer-3 analysis for the three under-damped modes. The parameter participation factors are represented as vectors around the associated modes showing the amplitude and direction of the eigenvalue variation subject to parameter perturbations. The parameter perturbations are proportional to the original value of parameters. The number in the parentheses in the subscript of each parameter indicates the associated apparatus number.}
	\label{fig_level3}
	\vspace{-4mm}
\end{figure}

After locating the participating apparatus and identifying their roles in system stability, the final step is to use the grey-box Layer-3 to re-tune the parameters in A11, A15, A28 and A29 to improve the stability of the three modes. Based on the chain-rule, the parameter participation factors for mode-1, mode-2 and mode-3 against the internal parameters of A11, A15, A28 and A29 were calculated and are illustrated in \figref{fig_level3}. It can be seen that mode-1 is sensitive to the physical parameters of A11 such as the sub-transient reactance, the armature leakage reactance and the inertia but also to control parameters in the automatic voltage regulator (AVR). In practice, it is easier to tune the control parameters, hence we can choose to increase the AVR feedback gain $K_{\text{F(11)}}$ to damp mode-1. For mode-2, both the dc-link control and the droop control have impacts on the damping but the mode is also sensitive to the $LCL$ filter capacitor and inductor. Decreasing the frequency droop gain $K_{\text{D(15)}}$ shifts the mode leftwards and stabilizes the system. This reflects the fact that $K_{\text{D(15)}}$ had been de-tuned high. For mode-3, parameters in A28 and A29 participate in a similar way with those in A29 having a larger impact. This reflects the fact that A29 had been de-tuned further than A28. Looking inside each inverter, it can be seen that increasing the current control bandwidth $f_{i}$ helps to stabilize the mode, but increasing the PLL bandwidth $f_{\text{PLL}}$ tends to destabilize the mode. We can remark, therefore, that this mode results from coupling between inner-loop (current control) and outer-loop (PLL) in a relatively weak grid. In this case, we choose to increase $f_{i(28)}$ and $f_{i(29)}$ to stabilize mode-3. For all the three modes, Layer-3 provides guidance on how to change control parameters to stabilize the system without the demand for changing the hardware. 

It is worth noting that many of the parameters are directly associated with states. For example, the parameter participation factor of $K_{\text{D}}$ is identical to the corresponding state participation factor of the droop control state. Thus Layer-3 provides similar interpretability to the classic state participation analysis. 

\begin{figure*}
\centering
\includegraphics[scale=0.3]{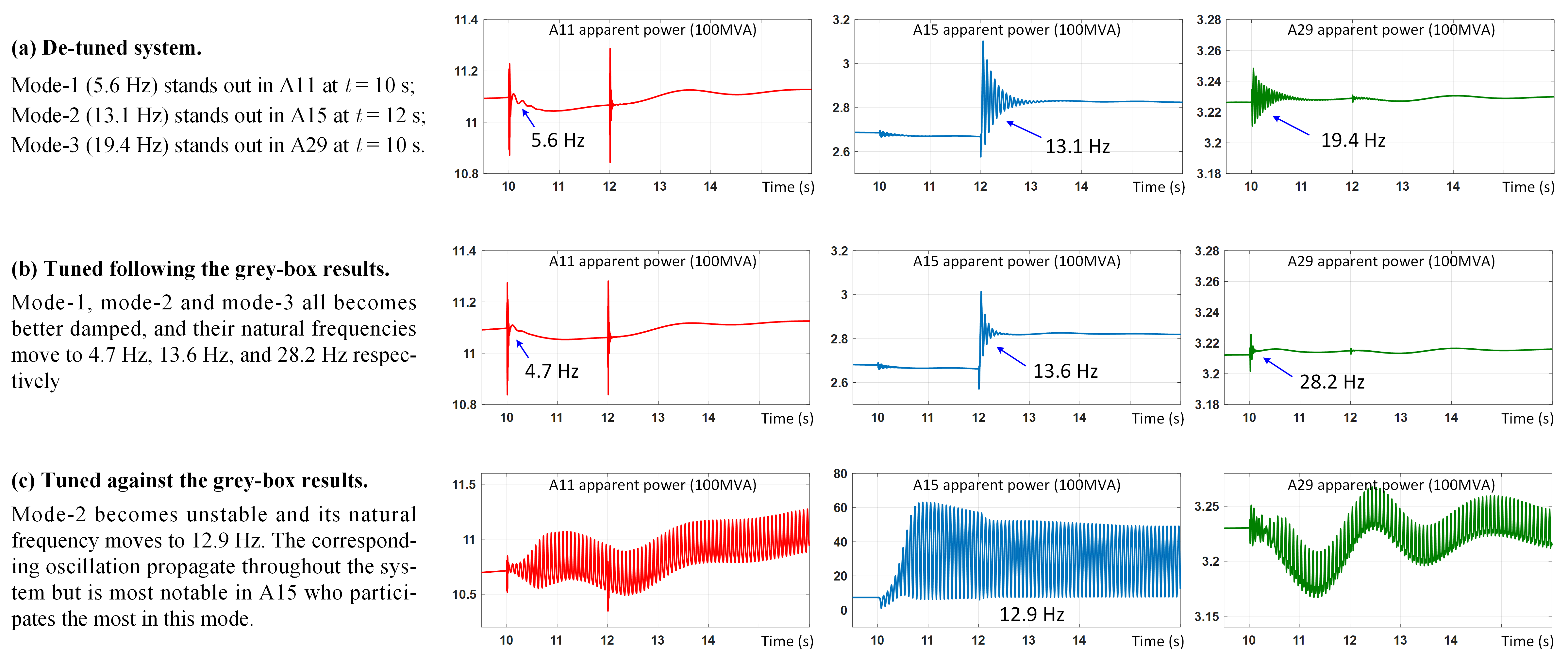}

\caption{Apparent power output of A11, A15 and A29 during two transients: load-61 disconnected at $t=10~\text{s}$, and load-42 increased 5\% at $t=12~\text{s}$. (a) De-tuned system with obvious oscillations during transient process; (b) Re-tuned by increase of $K_{\text{F(11)}}$ by 100\%, decrease of $K_{\text{D(15)}}$ by 60\%, and increase of  $f_{i(28)}$ and $f_{i(29)}$ each by 50\%, giving significant improvement in system stability; (c) Counter-tuned by increase of $K_{\text{D(15)}}$ by 50\% leading to instability.}
\label{fig_step}
\end{figure*}

To verify the predictions of the three layers of grey-box analysis, time-domain simulation of this case-study was conducted with the apparent power output of A11, A15 and A29 recorded in \figref{fig_step}. Two step-changes were introduced to the system to create transient behavior: the load at bus-61 was disconnected at $t=10~\text{s}$, and a $5\%$ increase in load at bus-42 was applied at $t=5~\text{s}$. Detailed description and discussion of the results are presented in the text within the figure and caption. Natural frequencies can be measured from the time-domain oscillations and they are found to agree with the resonant peaks in the frequency domain spectra in \figref{fig_admitt}. Different modes are excited in different apparatus, which agrees with the prediction of grey-box Layer-1 and Layer-2. The parameters of the participating apparatus are tuned with and against the suggestion of grey-box Layer-3 and the system is stabilized and destabilized accordingly. The grey-box-based participation analysis has correctly located the root-cause of oscillations and indicated appropriate choices for achieving stabilization.

\section{Conclusions}
The grey-box approach establishes a systematic method for small-signal stability and participation analysis of complex power systems with only impedance information. It has three layers with different transparencies at each layer to facilitate root-cause tracing to different depths, i.e. apparatus and parameters, according to the available knowledge. These grey-boxes provide a very useful tool to look inside a black-box (impedance) model to achieve almost the same transparency as a white-box (state-space) model, but without the need for manufacturers of apparatus to disclose the internal details that would be required in the white-box approach. The proposed grey-box approach is based on rigorous mathematical analysis with proof of the relationship between the residue and the impedance participation factor, and elucidation of the chain-rule of sensitivity propagation for internal states and parameters to be carried forward to the impedance participation factor, thus presenting a unified participation theory. 

\appendices

\section{Mathematical Preliminaries}
We summarise the mathematical preliminaries used in this paper to assist the reader and to make the paper self-contained.
\subsection{Residue}
In complex analysis, the residue of a complex function $G(s)$ is defined as the $g_{-1}$ coefficient of the Laurent series \cite{stein2010complex} of $G(s)$. This is, given the Laurent series of $G(s)$ around $\lambda$  
\begin{equation}
G(s) = \sum_{h=-\infty}^{\infty} g_h \cdot (s - \lambda)^{h}
\end{equation}
the residue of $G$ at $\lambda$ is defined as 
\begin{equation}
\text{Res}_\lambda G = g_{-1}.
\end{equation}
If $\lambda$ is a non-repeated pole of $G$, the residue is found from 
\begin{equation}
\text{Res}_\lambda G = \lim_{s\rightarrow \lambda} (s - \lambda) G(s).
\end{equation}
This property is used in the proof of Lemma 1 in Appendix B. The residue can be applied element-wise on a matrix of complex functions.

\subsection{Frobenius Inner Product} 
For two complex-valued matrices $V$ and $W$ with the same dimension, the Frobenius inner product \cite{garcia2017second} of $V$ and $W$ is defined as
\begin{equation}
\label{Frobenius_complex}
\langle V,W \rangle \triangleq \sum_{h,l}{\overline{{V}}_{hl} \  W_{hl}}
\end{equation}
where $h$ and $l$ are the row and column indices of the matrices, and $\overline{\phantom{x}}$ denotes complex conjugation. The complex conjugation in (\ref{Frobenius_complex}) ensures that the Frobenius inner product of a complex matrix with itself is a non-negative real number, and thus is induced the Frobenius norm $\| \cdot \|$
\begin{equation}
\label{Frobenius_norm}
\| V \| \triangleq \sqrt{\langle V,V \rangle}.
\end{equation}

The Frobenius inner product and norm are derived from the common inner product in vector spaces, so the properties of the common inner product are naturally inherited. One of the most useful properties is the Cauchy inequality
\begin{equation}
|\langle V,W \rangle| \leq \| V \| \cdot \| W \|
\end{equation}
where the equality holds if and only if $V$ and $W$ are aligned in orientation. This property is used in (\ref{cauchy}) in Section III-C. 

\subsection{Complex Function Derivative}
In this paper we use two types of derivatives for complex functions. The first type of derivative is the derivative of a complex function over a real number, e.g. derivative of a transfer function over its internal parameter $\rho$, $\partial H_\rho(s) / \partial{\rho}$. This type of derivative is the same as a real-function derivative with the real part and complex part of $H_\rho(s)$ treated separately. The second type of derivative is the derivative of a complex function over another complex number, e.g. a transfer function over another transfer function, $\partial H / \partial G$. For such a case, the mapping from $G$ to $H$ has to be analytic so that $\partial H / \partial G$ exists. For the scope of this paper, most complex-to-complex mappings are analytic throughout the complex plane except at poles so the derivative is almost always proper. As a special case, the derivative of a transfer function $H$ over its complex argument $s$ is called $H^\prime$, that is, 
\begin{equation}
H ^\prime(s) \triangleq \partial H(s) / \partial s.
\end{equation}

Both the complex and real derivatives can be applied to vectors and matrices with each element of the vectors and matrices treated as independent variables, and the resulted derivative is also a vector or matrix.

\section{Proof of Lemma 1}

We first prove the reduced case where $G_\rho$ is a scalar transfer function and the pole $\lambda$ is a zero of $H_\rho = G_\rho^{-1}$, that is,
\begin{equation}
\label{H_lambda}
    H_\rho(\lambda) = 0.
\end{equation}
A perturbation in $\rho$ induces a corresponding perturbation in $\lambda$, that is
\begin{equation}
\label{Delta_H_lambda_}
    H_{\rho+\Delta \rho}(\lambda+\Delta \lambda) = 0.
\end{equation}
Since $H_{\rho}$ is analytic around its zero $\lambda$, we have the following first-order Taylor expansion of (\ref{Delta_H_lambda_}) 
\begin{equation}
\label{Taylor}
    H_{\rho+\Delta \rho}(\lambda) + H_{\rho+\Delta \rho}^\prime(\lambda) \Delta \lambda= 0
\end{equation}
in which $H^\prime$ represents the derivative of $H$. Combining (\ref{H_lambda})-(\ref{Taylor}) yields
\begin{multline}
    H_{\rho+\Delta \rho}(\lambda) - H_{\rho}(\lambda) +  \\ H_{\rho}^\prime(\lambda) \Delta \lambda +
    { \left(H_{\rho+\Delta \rho}^\prime(\lambda) - H_{\rho}^\prime(\lambda)\right) \Delta \lambda} = 0
\end{multline}
and equivalently
\begin{equation}
\label{Delta_H_lambda}
    \Delta H_{\rho}(\lambda) + H_{\rho}^\prime(\lambda) \Delta \lambda + \Delta H_{\rho}^\prime(\lambda) \Delta \lambda= 0.
\end{equation}	
Suppressing the high-order infinitesimal $\Delta H_{\rho}^\prime(\lambda) \Delta \lambda$ in (\ref{Delta_H_lambda}) yields 
\begin{equation}
\label{suppress}
     \Delta \lambda = -H_{\rho}^\prime(\lambda)^{-1} 
     \cdot \Delta H_{\rho}(\lambda).
\end{equation}
As $\lambda$ is a non-repeated pole for $G_\rho$, the residue of $G_\rho$ at $\lambda$ is
\begin{equation}
\label{LHopital}
     \text{Res}_{\lambda} G_\rho = \lim_{s \rightarrow \lambda} (s - \lambda)G_\rho(s) =
     \lim_{s \rightarrow \lambda} \frac{s-\lambda}{H_\rho(s)} = \frac{1}{H_{\rho}^\prime(\lambda)}
\end{equation}
in which the second equal sign results from L'Hôpital's rule. Combining (\ref{suppress}) and (\ref{LHopital}) yields
\begin{equation}
     \Delta \lambda = 
     -\text{Res}_{\lambda} G_\rho \cdot 
     \Delta H_{\rho}(\lambda).
\end{equation}
This is the reduced case of Lemma 1 with $G_\rho$ being a scalar transfer function. 

Now we prove the case where $G_\rho$ is a square matrix and the pole $\lambda$ is a zero for the determinant of $H_\rho$, that is, $\text{det} (H_\rho(\lambda)) = 0$.
We take $\text{det} (H_\rho) \triangleq H_\text{det}$ as a scalar transfer function so a similar result to (\ref{suppress}) is obtained
\begin{equation}
\label{Delta_lambda_det}
     \Delta \lambda = -H_\text{det}^\prime(\lambda)^{-1} \Delta H_\text{det}(\lambda).
\end{equation}
Expanding $H_\text{det}$ along a column yields
\begin{equation}
\label{expand_by_element}
     H_\text{det} = \sum_{h} H_{\rho hl} {F}_{\rho hl}
\end{equation}
in which ${F}_{\rho}$ is the cofactor matrix for ${H}_{\rho}$ and the subscript $hl$ denotes the element in a matrix at the $h$-th row and $l$-th column. It is clear to see from (\ref{expand_by_element}) that \begin{equation}
     \frac{\partial H_\text{det}}{\partial H_{\rho hl}} = {F}_{\rho hl}
\end{equation} 
and hence
\begin{equation}
\label{Delta_H_det}
\begin{split}
    \Delta H_\text{det} = & 
    \sum_{h,l} \frac{\partial H_\text{det}}{\partial H_{\rho hl}} \Delta H_{\rho hl} \\ 
    = &
    \sum_{h,l} {F}_{\rho hl} \Delta H_{\rho hl} = 
    \langle \overline{{F}}_{\rho}, \Delta H_{\rho} \rangle
\end{split}
\end{equation}
where the complex conjugate $\overline{\phantom{x}}$ is associated with the Frobenius inner product $\langle \cdot , \cdot \rangle$ for complex matrices defined in (\ref{Frobenius_complex}). 

Since $G_\rho$ is now a matrix, its residue needs to be calculated element-wise
\begin{equation}
\begin{split}
\label{residue_by_element}
     \text{Res}_\lambda G_\rho 
     = &
     \lim_{s \rightarrow \lambda} 
     {(s-\lambda)}{G_\rho(s)} \\
     = &
     \lim_{s \rightarrow \lambda} \left({(s-\lambda)}{H_\rho(s)^{-1}}\right) \\
     = & 
     \lim_{s \rightarrow \lambda} \left(\frac{s-\lambda}{H_{\text{det}}(s)}{{F}_\rho(s)^{\top}}\right) =
     \frac{{F}_\rho(\lambda)^{\top}}{H^{\prime}_{\text{det}}(\lambda)}
\end{split}
\end{equation}
where we make use of the fact that
\begin{equation}
    H_\rho(s)^{-1} = F_\rho(s)^{\top} / H_{\text{det}(s)}.
\end{equation}
Combining (\ref{Delta_lambda_det}), (\ref{Delta_H_det}) and (\ref{residue_by_element}) yields Lemma 1
\begin{equation}
\begin{split}
	&\Delta \lambda = -H_{\text{det}}^\prime(\lambda)^{-1} 
    \langle \overline{F_{\rho}(\lambda)}, \Delta H_{\rho}(\lambda) \rangle \\
    &= \langle -\overline{\text{Res}_\lambda G_\rho}^\top , \Delta H_\rho (\lambda) \rangle 
    = \langle -\text{Res}^*_\lambda G_\rho , \Delta H_\rho (\lambda) \rangle .
\end{split}    
\end{equation}

\section{Illustration of LEMMA 1}

We use a simple three-node system to illustrate Lemma 1. The whole-system admittance of this three-node system is
	\begin{equation}
		{\hat{Y}}=\left[ \begin{matrix}
			\hat{Y}_{11}&		\hat{Y}_{12}&		\hat{Y}_{13}\\
			\hat{Y}_{21}&		\hat{Y}_{22}&		\hat{Y}_{23}\\
			\hat{Y}_{31}&		\hat{Y}_{32}&		\hat{Y}_{33}\\
		\end{matrix} \right] 
	\end{equation}
where each entry of ${\hat{Y}}$ is a $2\times2$ transfer function matrix in $dq$ frame. For instance, the first diagonal element is 
\begin{equation}
	\hat{Y}_{11}=\left[ \begin{matrix}
		\hat{Y}_{11}^{dd}\left( s \right)&	\hat{Y}_{11}^{dq}\left( s \right)\\
		\hat{Y}_{11}^{qd}\left( s \right)&	\hat{Y}_{11}^{qq}\left( s \right)\\
	\end{matrix} \right]
\end{equation}
which represents the whole-system admittance measured at the first node. 
Each element in $\hat{Y}_{11}$ can be expressed as the sum of a series of pole-residue pairs, for example,  $\hat{Y}_{11}^{dd}$ is
\begin{equation}
	\hat{Y}_{11}^{dd}\left( s \right) =\frac{r_{11,1}^{dd}}{s-\lambda _1}+\frac{r_{11,2}^{dd}}{s-\lambda _2}+\cdots +\frac{r_{11,N}^{dd}}{s-\lambda _N}
\end{equation}
where $r_{11,n}^{dd}$ is the residue of $\hat{Y}_{11}^{dd}$ at the $n$-th pole (eigenvalue) $\lambda_n$ for $n \in \{1,2,\cdots,N \}$. The residue of $\hat{Y}_{11}$ at a particular pole $\lambda$ (subscript $n$ is dropped for brevity) is then given by
\begin{equation}
	\mathrm{Res}_{\lambda}\hat{Y}_{11}=\left[ \begin{matrix}
		r_{11}^{dd}&		r_{11}^{dq}\\
		r_{11}^{qd}&		r_{11}^{qq}\\
	\end{matrix} \right]
\end{equation}
which yields the impedance participation factor of the apparatus connected at the first node according to (\ref{impedance_participation})
\begin{equation}
	p_{\lambda ,Z_1}=-\mathrm{Res}_{\lambda}^{*}\hat{Y}_{11}=-\left[ \begin{matrix}
		\overline{r}_{11}^{dd}&		\overline{r}_{11}^{qd}\\
		\overline{r}_{11}^{dq}&		\overline{r}_{11}^{qq}\\
	\end{matrix} \right]. 
\end{equation}

For a parameter perturbation $\Delta\rho$ in the apparatus at the first node, the corresponding impedance perturbation is
\begin{equation}
	\Delta Z_1\left( \lambda \right) =  
	\frac{\partial Z_1\left( \lambda \right)}{\partial \rho}
	\cdot
	\Delta \rho
	=\left[ \begin{matrix}
		\Delta Z_{1}^{dd}&		\Delta Z_{1}^{dq}\\
		\Delta Z_{1}^{qd}&		\Delta Z_{1}^{qq}\\
	\end{matrix} \right]
\end{equation}
which yields 
\begin{equation}
\begin{split}
	& \Delta \lambda =\left< p_{\lambda ,Z_1},\Delta Z_1\left( \lambda \right) \right> 
	\\
	& =-( r_{11}^{dd}\Delta Z_{2}^{dd} + r_{11}^{qd}\Delta Z_{2}^{dq}  + r_{11}^{dq}\Delta Z_{2}^{qd} +  r_{11}^{qq}\Delta Z_{2}^{qq})
\end{split}
\end{equation}
according to Lemma 1. $\Delta \lambda$ is a complex number whose direction is determined jointly by $p_{\lambda, Z_1}$ and $\Delta Z_1\left( \lambda \right)$.

\color{black}

\ifCLASSOPTIONcaptionsoff
  \newpage
\fi

\bibliographystyle{IEEEtran}
\bibliography{References}

\end{document}